  \newcommand{\C}{\mathbb{C}}
  \newcommand{\R}{\mathbb{R}}
  \newcommand{\U}{\mathbf{U}}
  \newcommand{\uu}{\mathbf{u}}
  \newcommand{\w}{\mathbf{w}}
  \newcommand{\x}{\mathbf{x}}
  \newcommand{\y}{\mathbf{y}}
  \newcommand{\z}{\mathbf{z}}
  \newcommand{\1}{\mathbf{1}}
  \def\diag{\mathop{{\rm diag}}\nolimits}
  \newcommand{\tr}{\mathop{\mathrm{tr}}\nolimits}
  \newcommand{\argmax}{\mathop{\mathrm{arg\,max}}}
  \newcommand{\trans}{^\top}
  \DeclareMathOperator{\prb}{\mathrm{Pr}}
  \DeclareMathOperator{\expval}{\mathrm E}
  \DeclareMathOperator{\poly}{\mathrm{poly}}
  \newtheorem{thm}{Theorem}%[section]
  \newtheorem{defn}[thm]{Definition}
  \newtheorem{prop}[thm]{Proposition}
  \newtheorem{prob}[thm]{Problem}
  \newtheorem{lemma}[thm]{ Lemma}
 \numberwithin{equation}{section} % Automatically number equations within sections
\begin{document}

 \title{A Simple Spectral Algorithm for Recovering Planted Partitions}
\author{Sam Cole}
\author{Shmuel Friedland}
\author{Lev Reyzin}
\affil{Department of Mathematics, Statistics, and Computer Science\\
 University of Illinois at Chicago \\
Chicago, Illinois 60607-7045, USA \\
\texttt{\{scole3,friedlan,lreyzin\}@uic.edu}}

\begin{titlepage}
\clearpage

 \renewcommand{\thefootnote}{\arabic{footnote}}
\date{}
 \maketitle
\thispagestyle{empty}

\begin{abstract}
In this paper, we consider the \emph{planted partition model}, in which $n = ks$ vertices of a random graph are partitioned into $k$ ``clusters,'' each of size $s$.  Edges between vertices in the same cluster and different clusters are included with constant probability $p$ and $q$, respectively (where $0 \le q < p \le 1$).  We give an efficient algorithm that, with high probability, recovers the clusters as long as the cluster sizes are are least $\Omega(\sqrt{n})$.  Informally, our algorithm constructs the projection operator onto the dominant $k$-dimensional eigenspace of the graph's adjacency matrix and uses it to recover one cluster at a time.  To our knowledge, our algorithm is the first purely spectral algorithm which runs in polynomial time and works even when $s = \Theta(\sqrt n)$, though there have been several non-spectral algorithms which accomplish this.  Our algorithm is also among the simplest of these spectral algorithms, and its proof of correctness illustrates the usefulness of the Cauchy integral formula in this domain.
\end{abstract}
\end{titlepage}

\newpage

\section{Introduction and previous work}\label{intro}
In the Erd\H{o}s-R\'enyi random graph model~\cite{er1959}, graphs $G(n,p)$ on $n$ vertices are generated by including each of the possible $\binom{n}{2}$ edges
independently at random with probability $0 < p < 1$.
A classical conjecture of Karp~\cite{K79} states that there is no efficient algorithm for finding cliques of size~$(1\nobreak+\nobreak\epsilon)\log_{1 / p} n$, though
cliques of size at least $2 \log_{1 / p} n$ will almost surely exist~\cite{bollobas1976cliques}.

Jerrum~\cite{Jerrum92} and Ku\v cera~\cite{K95} introduced a potentially easier variant called the \emph{planted clique} problem.  In this model, one starts with a random graph, but 
additionally, edges are added deterministically to an unknown set of $s$ vertices (known as the ``plant'') to make them form a clique.  
The goal then is to determine a.s. exactly which vertices belong to the planted clique, which should 
be easier when $s$ becomes large.

When $s = \Omega(\sqrt{n \log n})$, the clique can be found by simply taking the $s$ vertices with the largest degrees~\cite{K95}.  This bound
was improved using spectral methods to $\Omega(\sqrt{n})$ by Alon~et~al.~\cite{AKS98} and then others~\cite{AmesV11,Coja-Oghlan10,DekelGP11,FeigeK00,FeigeRon10,McSherry01}.
These methods also handle a generalization of this problem in which edges within the plant are added merely with higher probability rather than deterministically.

A more general version of the problem is to allow for planting multiple disjoint cliques, sometimes called a \emph{planted clustering}.  In the most basic
version, known as the \emph{planted partition} model (also called the stochastic block model), $n$ nodes are partitioned into $k$ disjoint clusters of size $s=n/k$, which 
are ``planted" in a random graph.  Two nodes $u$ and $v$ get an edge with probability $p$ if they are in
the same cluster and with probability $q$ if they reside in different clusters (with $p>q$ constant).  The goal is now to recover the unknown clustering from the random graph generated according to the model, i.e., to determine exactly the vertices in each cluster a.s.

As in the planted clique case, a relatively simple algorithm can recover the clustering when the clique sizes are $\Omega(\sqrt{ n \log n})$---in this
case pairs of vertices with the most common neighbors can be placed in the same cluster~\cite{ChenX14}.  However, when the cluster sizes are only required to be $\Omega(\sqrt{n})$,
the problem, as in the planted clique case, becomes more difficult because a simple application of the Azuma-Hoeffding inequality no longer suffices.  

Our main result is that this can, in fact, be done when the clusters are size $s = \Omega(\sqrt n)$:
\begin{thm}
There exists a deterministic, polytime algorithm which, for sufficiently large $n$, with probability $1-o(1)$ correctly recovers planted partitions in which all clusters are size $s \geq c\sqrt n$, where $c = O(1 / (p - q)^2)$.
\end{thm}
Note that in this paper we consider only the setting in which $p$ and $q$ are constant and all clusters are the same size $s = n / k$.  We discuss more general settings in~\cite{cole2017}. 

Our algorithm is, to our knowledge, the first purely spectral algorithm which runs in polynomial time and recovers the planted partition a.s.\ even when all clusters are size $\Theta(\sqrt n)$, though there have been several non-spectral algorithms which work in this setting~\cite{ames2014,chen2014,oymak2011}.  In particular, the well-known spectral algorithms~\cite{McSherry01, vu2014simple} require that $k = o(\sqrt n)$ and hence do not work when all clusters are size $\Theta(\sqrt n)$ (though they work in considerably more general settings).  On the other hand, Giesen and Mitsche's algorithm~\cite{Giesen05} works when all clusters are size $\Theta(\sqrt n)$ but has running time exponential in $k$.  See Appendix~\ref{prevwork} for comparison with previous work

%Our algorithm is not the first to achieve this bound for the planted partition problem~\cite{ames2014,chen2014,Giesen05,oymak2011} (see Appendix~\ref{prevwork} for comparison with previous work).  It is, however, arguably much simpler than other spectral algorithms, e.g.\ that of Giesen and Mitsche~\cite{Giesen05}, who, to our knowledge, were the first to achieve the $\sqrt{n}$ bound for this problem. Our proof is also of interest because it uses a novel application of Cauchy's integral formula.
%Our algorithm works for any constants $0 \le q < p  \le 1$ and any cluster size $s \geq c\sqrt n$.

Efficient algorithms for planted clustering typically rely on either convex optimization \cite{ames2014,chen2014,oymak2011} or spectral techniques \cite{Giesen05,McSherry01,vu2014simple}.  The latter, including ours, 
often involve looking at the projection operator onto the vector space spanned by the $k$ eigenvectors corresponding to the $k$ largest eigenvalues of the adjacency matrix $\hat A$ of the randomly generated graph $\hat G$ and showing that it is ``not too far'' from the projection operator of the expectation matrix $\expval[\hat A]$ onto its own $k$ largest eigenvalues.  Our algorithm is among the simplest of these spectral algorithms: we don't randomly partition the vertices beforehand, and hence there is no messy ``cleanup'' step at the end.  

A natural approach for identifying all the clusters would be to identify a single cluster, remove it, and recurse on the remaining vertices.  This is hard to make work because the randomness of the instance $\hat G$ is ``used up'' in the first iteration, and then subsequent iterations cannot be handled independently of the first.  Existing spectral approaches bypass these difficulties by randomly splitting the input graph into parts, thus forcing
independence in the randomness on the parts~\cite{Giesen05,McSherry01,vu2014simple}.  This partitioning trick works at the cost
of complicating the algorithm.
We, however, are able to make the natural recursive approach work by ``preprocessing the randomness'': we show that certain (exponentially many) events \emph{all} occur simultaneously with high probability, and as long as they all occur our algorithm \emph{definitely} works.

$\Omega(\sqrt n)$ cluster size is generally accepted to be the barrier for efficient algorithms for ``planted''  
problems.  Evidence for the difficulty of beating the $\sqrt{n}$ barrier dates back to Jerrum~\cite{Jerrum92}, who showed
a specific Markov chain approach will fail to find smaller cliques.  Feige and Krauthgamer~\cite{FeigeK00} showed that Lov\'asz-Schrijver SDP
relaxations run into the same barrier, while Feldman~et~al.~\cite{FeldmanGRVX13} show that all ``statistical algorithms" also provably
fail to efficiently find smaller cliques in a distributional version of the planted clique problem.  Recently, Ailon~et~al.\ \cite{ailon2013} were able to recover planted clusterings in which some of the cluster sizes are $o(\sqrt n)$, but their algorithm's success depends on the simultaneous presence of clusters of size $\Omega(\sqrt n\log^2 n)$.

\subsection{Outline}

In Section~\ref{plantedpartition} we formally define the planted partition model.  In Section~\ref{sec:algsection} we present our algorithm for identifying the clusters, and we briefly discuss its running time in Section~\ref{runningtime}.  We prove its correctness in Section~\ref{sec:clusteridproof}.  Sections \ref{randsymmat}-\ref{sec:deviation} are dedicated to developing the linear algebra tools necessary for the proof: in Section~\ref{randsymmat} we introduce tools from random matrix theory which we use in Section~\ref{Ahateigs} to characterize the eigenvalues of the (unknown) expectation matrix $A$ and the randomly generated adjacency matrix $\hat{A}$.  This, in turn, allows us to bound the difference of their projections in Section~\ref{sec:deviation}.  Showing that the projection operators of $A$ and $\hat A$ are ``close'' is the key ingredient in our proof.  

%in Section~\ref{Ahateigs} we characterize the eigenvalues of the (unknown) expectation matrix $A$ and the randomly generated adjacency matrix $\hat{A}$, which allows us to bound the difference of their projections in Section~\ref{sec:deviation}.  Showing that the projection operators of $A$ and $\hat A$ are ``close'' is the key ingredient in our proof.  

\section{The planted partition problem}\label{plantedpartition}

We now formally define the planted partition problem.

\begin{defn}[Planted partition model]
Let $\mathcal C = \{C_1, \ldots, C_k\}$ be a partition of the set $[n] := \{1, \ldots, n\}$ into $k$ sets of size $s=n/k$, called \emph{clusters} (assume $s | n$).  For constants 
$0 \leq q < p \leq 1$, we define the \emph{planted partition model} $\mathcal G(n, \mathcal C, p, q)$ to be the probability space of graphs with vertex set $[n]$, with edges $ij$ (for $i \neq j$) included independently with probability $p$ if $i$ and $j$ are in the same cluster in $\mathcal C$ and probability $q$ otherwise.  
\end{defn}

See Figure~\ref{plantedpartitionfigure}.  Note that the case $k = 1$ gives the standard Erd\H{o}s-R\'enyi model $\mathcal G(n, p)$~\cite{er1959}, and the case $k = n$ gives $\mathcal G(n, q)$.

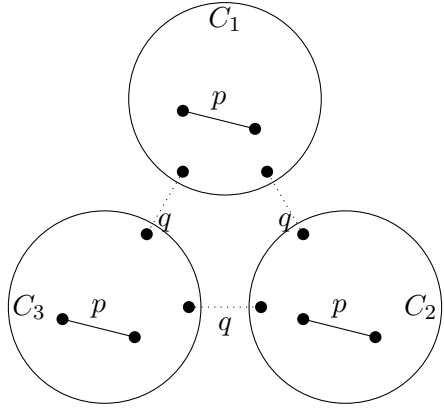
\begin{figure}
\begin{center}
\begin{tikzpicture}[scale=.08]

\foreach \x/\y/\z/\w in {0/34.64/20/0, 20/0/-20/0, -20/0/0/34.64} {	% Initial and final points; use initial point as center of cluster

\draw (\x, \y) circle(16);	% Draw circle to represent cluster

% Draw an edge inside the cluster
\fill (\x - 7, \y - 2) circle(1);
\draw (\x - 7, \y - 2) -- (\x + 5, \y - 5);
\fill (\x + 5, \y - 5) circle(1);
\draw (\x - 1, \y - 3.5) node[anchor=south]{$p$};

% Draw edges connecting clusters
\foreach \t in {.35, .65}
\fill (\t * \x + \z - \t * \z, \t * \y + \w - \t * \w) circle(1);
\draw[dotted] (.35 * \x + .65 * \z, .35 * \y + .65 * \w) -- (.65 * \x + .35 * \z, .65 * \y + .35 * \w);
\draw (.5 * \x + .5 * \z, .5 * \y + .5 * \w) node[anchor=north]{$q$};
}

% Label the clusters
\draw (0, 51.64) node[anchor=north]{$C_1$};
\draw (37, 0) node[anchor=east]{$C_2$};
\draw (-37, 0) node[anchor=west]{$C_3$};

\end{tikzpicture}
\end{center}
\caption{An illustration of the planted partition model.  Edges between two vertices in the same cluster are added with probability $p$, while edges between two vertices in different clusters are added with probability $q$.}\label{plantedpartitionfigure}

\end{figure}

We will denote as follows the main quantities to consider in this paper. 
\begin{itemize}
\item	$\hat{G} = ([n], \hat E)$ -- a random graph obtained from an \emph{unknown} planted partition distribution $\mathcal G(n, \mathcal C, p, q)$.  This is what the cluster identification
algorithm receives as input. 
\item	$\hat A = (\hat a_{ij})_{i, j = 1}^n \in \{0, 1\}^{n \times n}$ -- the adjacency matrix of $\hat G$.
\item	$\expval[\hat A] := (\expval[\hat a_{ij}])_{i, j = 1}^n$ -- the entrywise expectation of $\hat A$.  \item	$A = (a_{ij})_{i, j= 1}^n := \expval[\hat A] + pI_n$ -- the expectation of the adjacency matrix $\hat{G}$ with $p$s added to the diagonal (to
make it a rank $k$ matrix and simplify the proofs).
 \end{itemize}

\begin{prob}[Planted partition]
Identify (or ``recover'') the unknown partition $C_1, \ldots, C_k$ (up to a permutation of $[k]$) given only $\hat G$, or, equivalently, reproduce $A$ given only $\hat A$.
\end{prob}

In this paper we give an algorithm to recover the clusters which is based on the $k$ largest eigenvalues of  $\hat A$ and the corresponding eigenspaces. 

\subsection{Graph and matrix notation}\label{notation}

We will use the following notation throughout this paper:

\begin{itemize}
%\item	$G = (V, E)$ -- a simple, undirected graph with vertex set $V$ and edge set $E$.  
\item	$N_G(v)$ -- neighborhood of vertex $v$ in a graph $G$.  We will omit the subscript $G$ when the meaning is clear.
%\item	$d_G(v) := |N_G(v)|$ -- the degree of vertex $v$ in $G$.  Again, we will omit the subscript when the meaning is clear.
\item	$G[S]$ -- the induced subgraph of $G$ on $S \subseteq V(G)$.
\item	$A[S]$ -- the principal submatrix of $A$ with row and column indices restricted to $S$.
\item	$\lambda_i(A)$ -- the $i$th largest eigenvalue of a symmetric matrix $A$ (recall that symmetric matrices have real eigenvalues).
\item	$\lambda_i(G)$ -- the $i$th largest eigenvalue of $G$'s adjacency matrix.
\item	$P_k(A)$ -- orthogonal projection operator onto the subspace of $\R^n$ spanned by eigenvectors corresponding to the largest $k$ eigenvalues of an $n \times n$ symmetric matrix $A$, represented in the standard basis for $\R^n$.
\item	$||\cdot||_2$ -- the $\ell_2$- (``spectral'') norm of a vector or matrix.
\item	$||\cdot||_F$ -- the Frobenius norm of a matrix.
\item	$I_n$ -- the $n \times n$ identity matrix.
\item	$J_n$ -- the $n \times n$ 1s matrix.
\item	$\1_S$ -- the indicator vector $\in \{0, 1\}^n$ for the set $S \subseteq [n]$.
\item	$\1_n$ -- the all 1s vector $\in \R^n$, i.e. $\1_{[n]}$.
%\item	$\supp_j(A)$ -- the support of column $j$ of $A = (a_{ij})_{i, j = 1}^n$, i.e. $\supp_j(A) := \{i : a_{ij} \neq 0\}.$
\item	$\expval[X]$ -- the expectation of a random variable $X$.  If $X$ is matrix or vector valued, then the expectation is taken entrywise.
\item	a.s.\ -- almost surely, i.e.\ with probability $1 - o(1)$ as $n \to \infty$.
\end{itemize}

\section{The cluster identification algorithm}\label{sec:algsection}

The main result of this paper is that Algorithm~\ref{alg} below recovers clusters of size $c \sqrt{n}$:

\begin{thm}\label{mainthm}
For sufficiently large $n$ with probability $\geq 1 - 2^{-\Omega(\sqrt n)}$, Algorithm~\ref{alg} correctly recovers planted partitions in which the of size of the clusters is $\geq c\sqrt n$, where $c := \max\left\{\frac{88}{p - q}, \frac{72}{(p-q)^2}\right\}$.
\end{thm}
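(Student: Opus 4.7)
The plan is to reduce Theorem~\ref{mainthm} to a purely deterministic structural statement on $\hat{G}$ and then show that this statement holds with probability $1 - o(1)$. The structural ``good event'' I would aim for is: for every $J \subseteq [k]$, every column of $P_{|J|}(\hat{G}_J)$ lies within $\ell_\infty$-distance strictly less than $\frac{1}{2s}$ of the corresponding column of $P_{|J|}(G_J)$. Insisting on this bound simultaneously over all $2^k$ subsets $J$, rather than only over the sequence of subsets the algorithm actually encounters, is precisely what resolves the ``used-up randomness'' obstacle flagged in the introduction: conditional on the good event, the entire recursion is deterministic.

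The payoff of this good event comes from the explicit form of $P_l(G_J)$. Since $G_J$ is a rank-$l$ matrix whose top eigenspace is $\mathrm{span}\{\mathbf{1}_{C_i} : i \in J\}$ (as Section~\ref{Geigenvalues} will make precise), one has $P_l(G_J) = \sum_{i \in J}\frac{1}{s}\mathbf{1}_{C_i}\mathbf{1}_{C_i}^\top$, so the $j$-th column is exactly $\frac{1}{s}\mathbf{1}_{C(j)}$. Thresholding any column of $P_l(\hat{G}_J)$ at $\frac{1}{2s}$ therefore returns exactly the cluster of the pivot $j$; removing that cluster yields an instance on $l - 1$ clusters to which the structural statement still applies, and induction on $l$ completes the recovery.

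To prove the column-wise deviation bound I would use Cauchy's integral formula. Fix $J$ and choose a positively oriented contour $\Gamma \subset \C$ enclosing exactly the $l$ largest eigenvalues of both $G_J$ and $\hat{G}_J$. The eigenvalue analysis of Section~\ref{Geigenvalues} shows that the nonzero spectrum of $G_J$ is of order $s$ (the smallest nonzero eigenvalue being $(p-q)s$) while the rest is $0$; combined with the spectral-norm concentration $\|\hat{G} - G\|_2 = O(\sqrt n)$ established in Section~\ref{sec:eigghat} (which transfers to every principal submatrix by Cauchy interlacing, since each $\hat{G}_J - G_J$ is a principal submatrix of $\hat{G} - G$), the hypothesis $s \geq c\sqrt n$ with $c$ large enough forces such a $\Gamma$ to exist at distance $\Theta((p-q)s)$ from both spectra. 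The resolvent identity then gives
\begin{equation}
P_l(\hat{G}_J) - P_l(G_J) = \frac{1}{2\pi i}\oint_\Gamma (zI - \hat{G}_J)^{-1}(\hat{G}_J - G_J)(zI - G_J)^{-1}\, dz,
\end{equation}
and applying both sides to $e_j$ reduces the required column-wise estimate to estimates on the resolvent-smoothed vector $(\hat{G}_J - G_J)(zI - G_J)^{-1} e_j$. The latter is tractable because $(zI - G_J)^{-1}e_j$ decomposes explicitly into its component in $\mathrm{span}\{\mathbf{1}_{C_i}\}$ (of small magnitude $\Theta(1/((p-q)s))$) and its orthogonal component (of magnitude $\Theta(1/|z|)$), making the integrand a sum of independent centered random variables against a deterministic vector.

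The main obstacle, as I see it, is the union bound. Since $k \leq \sqrt n / c$, the good event must hold for $2^k = 2^{O(\sqrt n)}$ subsets $J$, so each individual failure event must have probability at most $2^{-\omega(\sqrt n)}$. I expect this to follow from sharp sub-Gaussian or matrix-Bernstein tail bounds applied to the scalar random variables $e_i^\top (\hat{G}_J - G_J)(zI - G_J)^{-1} e_j$ (sums of independent bounded random variables), combined with the single spectral-norm bound on $\hat{G} - G$ that covers all $J$ at once via interlacing. Discretizing $\Gamma$ and taking a union bound over choices of $i$, $j$, $J$, and discretization points should keep the probabilistic loss within the $2^{-\omega(\sqrt n)}$ budget, after which the deterministic analysis of paragraphs one and two yields Theorem~\ref{mainthm}.
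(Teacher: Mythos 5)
Your high-level architecture (preprocess the randomness so that a good event holds for all $2^k$ subsets $J$ simultaneously, then run a deterministic induction) matches the paper, and the Cauchy-integral representation of $P_l(\hat G_J)-P_l(G_J)$ is exactly the tool used in Lemma~\ref{fundestdifprojnorm}. But the quantitative core of your plan --- an entrywise bound $\|(P_l(\hat G_J)-P_l(G_J))e_j\|_\infty < \tfrac{1}{2s}$, so that thresholding each column recovers a cluster exactly --- has a genuine gap, in two places. First, the integrand $(zI-\hat A)^{-1}(\hat A-A)(zI-A)^{-1}$ is \emph{not} a sum of independent centered variables against a deterministic vector: the left resolvent depends on all of $\hat A$. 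To linearize you must replace $(zI-\hat A)^{-1}$ by $(zI-A)^{-1}$, and the resulting second-order term is of the same order as the quantity you are trying to bound; controlling it entrywise is precisely the hard part of ``entrywise eigenvector analysis'' and requires leave-one-out or bootstrapping machinery you have not supplied. Second, even the linearized first-order term does not meet your threshold. On the dominant (left) side of the contour, $\|(zI-A)^{-1}e_i\|_2\,\|(zI-A)^{-1}e_j\|_2 \lesssim \big((c'\sqrt m/2)^2+y^2\big)^{-1}$, so a single standard deviation of the scalar $e_i^\top(zI-A)^{-1}(\hat A-A)(zI-A)^{-1}e_j$, integrated over the contour, already contributes $\Theta\!\big(\tfrac{1}{c'\sqrt m}\big)$ to the entry. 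For $m=n$ this is $\Theta\!\big(\tfrac{1}{(p-q)c\sqrt n}\big)$, which \emph{exceeds} $\tfrac{1}{2s}=\tfrac{1}{2c\sqrt n}$ by a factor $2/(p-q)\ge 2$ before you have paid anything for concentration. To additionally survive a union bound over $2^{\Theta(\sqrt n)}$ sets $J$ you would need each scalar to deviate by $T\gtrsim n^{1/4}$ standard deviations, pushing the achievable entrywise guarantee to $\Theta(n^{-1/4})\gg 1/s$. So the exact-thresholding step cannot be established by the argument you describe.

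The paper escapes this by never asking for entrywise control. It proves only the operator-norm bound $\|P_l(\hat G_J)-P_l(G_J)\|_2\le\epsilon$ (Lemma~\ref{fundestdifprojnorm}, via your contour but with everything in $\|\cdot\|_2$) and the rank-based Frobenius bound $\|\cdot\|_F^2\le 2l\epsilon^2$ (Lemma~\ref{estfrobnormGhat}); an averaging argument then yields only that \emph{some} column $j^*$ has its top-$s$ index set $W_{j^*}$ overlapping a single cluster in $\ge(1-3\epsilon)s$ vertices (Lemmas~\ref{existsgoodcol}--\ref{almostcluster}), and such a column is located algorithmically by maximizing $\|\hat P\1_{W_j}\|_2$. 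Exact recovery is then a separate combinatorial cleanup: count neighbors in $W_{j^*}$ and use Hoeffding over only $n^2$ vertex--cluster pairs (Lemmas~\ref{recoverclusterswhp}--\ref{recovercluster}). Also note that the union bound over all $J$ is carried entirely by the matrix concentration result (Theorem~\ref{genFKsubmatrices}), which gives $\|\hat G_J-G_J\|_2\le 8\sqrt{|J|s}$ for all $J$ at once with failure probability $2(2/e)^{\sqrt n}$; no per-entry scalar concentration is needed. If you want to salvage your route, you would need to either prove a genuinely new entrywise perturbation bound uniform over exponentially many submatrices, or retreat to the paper's one-good-column-plus-cleanup structure.
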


\begin{algorithm}[H]
\caption{}\label{alg}
Given a graph $\hat G = (\hat V, \hat E)$ and cluster size $s$:
\begin{enumerate}
\item	Let $\hat A$ be the adjacency matrix of $\hat G$, $n := |\hat V|$, $k := n / s$. 

\item	Let $P_k(\hat A) =: (\hat p_{ij})_{i, j \in \hat V}$ be the orthogonal projection operator onto the subspace of $\R^n$ spanned by eigenvectors corresponding to the largest $k$ eigenvalues of $\hat A$.\label{step:proj}

\item	For each column $j$ of $P_k(\hat A)$, let $\hat p_{i_1j} \geq \ldots \geq \hat p_{i_{n - 1}j}$ be the entries other than $\hat p_{jj}$ in nonincreasing order.  Let $W_j := \{j, i_1, \ldots, i_{s - 1}\}$, i.e., the indices of the $s - 1$ greatest entries of column $j$ of $P_k(\hat A)$, along with $j$ itself.\label{step:Wj}

\item	Let $j^*$ be the column $j$ that maximizes $||P_k(\hat A)\1_{W_j}||_2$, i.e. $j^* := \argmax_{j \in \hat V}||P_k(\hat A)\1_{W_j}||_2$.  It will be shown that $W_{j^*}$ has large intersection with a single cluster $C_i \in \mathcal C$ a.s.\label{step:jstar}

\item	Let $C$ be the set of $s$ vertices in $\hat G$ with the \emph{most} neighbors in $W_{j^*}$.  It will be shown that $C = C_i$ a.s.\label{step:id}

\item	Remove $C$ and repeat on $\hat G[\hat V \setminus C]$.  Stop when there are $< s$ vertices left.
\end{enumerate}

\end{algorithm}

The overview of Algorithm~\ref{alg} is as follows.  The algorithm gets a random graph $\hat{G}$ generated according to $\mathcal G(n, \mathcal C, p ,q)$.  
We first construct the projection operator which projects onto the subspace of $\R^n$ spanned by the eigenvectors corresponding to the largest $k$ eigenvalues of $\hat{G}$'s adjacency matrix.  This, we will argue,
gives a fairly good approximation of at least one of the clusters, which we can then find and ``fix up."  Then we remove the 
cluster and repeat the algorithm.

Note that we ensure that Algorithm~\ref{alg} works in every iteration a.s.\ by ``preprocessing the randomness''; more precisely, we will show that a.s.\ certain events occur simultaneously on \emph{all} (exponentially many) subgraphs of $\hat G$ induced on a subset of the clusters, and that as long as they all hold Algorithm~\ref{alg} will \emph{definitely} succeed.  See Section~\ref{sec:clusteridproof}.

\subsection{Running time}\label{runningtime}

Let us analyze the running time of one iteration of Algorithm~\ref{alg}.  Steps~\ref{step:proj} and~\ref{step:jstar} are the most costly.
\begin{itemize}

\item	In step~\ref{step:proj}, computing $P_k(\hat A)$ %is the best rank-$k$ approximation to $\hat A$ (with respect to $||\cdot||_2$) or, equivalently, the rank-$k$ truncated SVD of $\hat A$.  
can be done via classical subspace iteration methods in
%There are many classical deterministic algorithms to compute this in 
time $O(n^2k)$~\cite{GolubL96,gu2015subspace}.  Alternatively, one may utilize one of several recent randomized algorithms~\cite{gu2015subspace,HalkoMT11,kishore2016literature,nguyen2009fast} which allow this to be done faster, e.g.\ in time $O(n^2\log k)$~\cite{HalkoMT11}.

%Recently, several randomized algorithms have been developed which allow this step to be done in 
%Alternatively, one may use a randomized algorithm~\cite{HalkoMT11} to do it in time $O(n^2\log k)$.  
%See Kumar and Shneider~\cite{kishore2016literature} for a survey of these techniques. 

\item	
Step~\ref{step:jstar} can be done na\"ively in $O(n^3)$ time. %by multiplying $P_k(\hat A)\1_{W_j}$ for $j = 1, \ldots, n$.  
However, this can be improved to $O(n^2k)$ by instead multiplying $P_k(\hat A)\hat H$ and taking the norm of each column, where $\hat H$ is defined as in Section~\ref{mainpfnotation}.  From step~\ref{step:proj} we get an orthonormal decomposition of  $P_k(\hat A)$, i.e.\ an $n \times k$ orthogonal matrix $U$ such that  
$UU^\top = P_k(\hat A)$.  Thus, we can compute $P_k(\hat A)\hat H = UU^\top\hat H$ in $O(n^2k)$ time by first multiplying a $k \times n$ matrix and an $n \times n$ matrix, then an $n \times k$ matrix and a $k \times n$ matrix.

In theory, this step can be sped up further using a fast matrix multiplication algorithm~\cite{Coppersmith, LeGall2014}, but such algorithms are rarely used in practice due to numerical instability and large constants hidden in their asymptotic running times. 

%This can be improved using state-of-the-art matrix multiplication algorithms~\cite{LeGall2014}.  If $k = O(n^{.302})$, then the multiplication $UU^\top\hat H$ can be done in $n^{2 + o(1)}$ time.  On the other hand, if $\Omega(n^{.302}) \leq k \leq \sqrt n$, then the multiplication can be done in time $O(n^{2.04})$.  See Table~1 in~\cite{LeGall2014}.

%Currently, the fastest known matrix multiplication algorithms take time about $O(n^{2.375})$~\cite{Coppersmith, LeGall2014}.  Improving this bound is a major open problem in computer science and matrix theory.
\end{itemize}

Thus, each iteration of Algorithm~\ref{alg} can be done in $O(n^2k)$ time.  Since there are $k$ iterations, the overall running time is $O(n^2k^2)$.  In particular, as $k \leq \sqrt n$, this is $O(n^3)$.   

%Thus, each iteration of Algorithm~\ref{alg} can be done in time $O(n^2k)$ using classical techniques. Since there are $k$ iterations, the overall running time is $O(n^2k^2)$.  In particular, as $k \leq \sqrt n$, this is $O(n^3)$.  Utilizing randomized techniques in step~\ref{step:proj} and fast matrix multiplication in step~\ref{step:jstar}, this can be reduced to $n^{2 + o(1)}k$ if $k = O(n^{.302})$ and $O(n^{2.04}k)$ otherwise.
%Using the randomized SVD algorithm~\cite{nguyen2009fast}, the running time of each iteration is dominated by step~\ref{step:jstar}; hence, the overall running time is $O(n^{2.375}k)$, which is at most $O(n^{2.875})$.  

\section{Eigenvalues of random symmetric matrices}\label{randsymmat}

In Section~\ref{Ahateigs} we will show that the eigenvalues of the random matrix $\hat A$ are close to those of its expectation matrix $A$.  To do so, we will need the following well-known result of F\"uredi and Koml\'os about the concentration of eigenvalues of random symmetric matrices~\cite[Theorem~2]{FK81}:

\begin{thm}\label{genFK}
Let $X=[x_{ij}]\in\R^{n\times n}$ be a random symmetric matrix where $x_{ij}$ are independent random variables for $1\le i\le j\le n$.
Assume that there exists $K,\sigma>0$ so that the following conditions hold independent of $n$:
\begin{enumerate}
\item $\expval[x_{ij}]=0$ for $1\le i\le j\le n$.
\item $|x_{ij}|\le K$ for $1\le i\le j\le n$.
\item $\expval[x_{ij}^2] \le \sigma^2$ for $1\le i\le j\le n$.
\end{enumerate}
Then
\begin{equation}\label{genFK1}
\max_{i = 1}^n|\lambda_i(X)| \leq 2\sigma\sqrt{n}+50Kn^{\frac{1}{3}}\log n
\end{equation}
with probability $\geq 1 - n^{-10}$ for $n \geq n_0$.
\end{thm}

Note that the original paper by F\"uredi and Koml\'os assumes that $\expval[x_{ij}^2] = \sigma^2$ for all $i, j$, which in turn makes the bound~(\ref{genFK1}) tight.  However, if all we need is the upper bound in~(\ref{genFK1}), as is the case in this paper, then the proof in~\cite{FK81} goes through with $\expval[x_{ij}^2] \leq \sigma^2$. (Actually, it was pointed out by Vu~\cite{Vu2007} that the proof in~\cite{FK81} contains a minor mistake, so we follow the corrected proof in~\cite{Vu2007}.)

Unfortunately, the $n^{-10}$ failure probability isn't small enough to guarantee our algorithm's success in every iteration, as we will need to apply Theorem~\ref{genFK} simultaneously to $2^{O(\sqrt n)}$ submatrices of $\hat A$ (see Section~\ref{mainpf}); however, we may combine it with the following concentration result to get exponentially small failure probability~\cite[Theorem~1]{alon2002concentration}:

\begin{thm}\label{akv}
Let $X=[x_{ij}]\in\R^{n\times n}$ be a random symmetric matrix where $x_{ij}$ are independent random variables such that $|x_{ij}| \leq 1$ for $1\le i\le j\le n$.  Then for every $1 \leq j \leq n$ the probability that $\lambda_j(X)$ deviates from its median by more than $t$ is at most $4e^{-t^2/32j^2}$.
\end{thm}

Combining Theorems~\ref{genFK} and~\ref{akv}, we get the following:

\begin{thm}\label{strongerFK}
Let $X$ be defined as in Theorem~\ref{genFK}.  Then
\[\max_{i = 1}^n|\lambda_i(X)| \leq 2(\sigma + 3K)\sqrt n\]
with probability $\geq 1 - e^{-n}$ for $n \geq n_0$.
\end{thm}

\begin{proof}
By Theorem~\ref{genFK},
\begin{equation}\label{genFK1m}
\prb\left[\max_{i = 1}^n|\lambda_i(X)| \geq 2\sigma\sqrt{n}+50Kn^{\frac{1}{3}}\log n\right]< \frac{1}{n^{10}}.
\end{equation}
For $n \geq n_0$, we have 
\[50n^{\frac{1}{3}}\log n\le 0.2 \sqrt{n}\Rightarrow 2\sigma\sqrt{n}+50Kn^{\frac{1}{3}}\log n\le 2(\sigma+0.1K)\sqrt{n}.\]
Let $\lambda$ be the median of the random variable $\lambda_1(X)$.  We claim that 
\begin{equation}\label{median}
|\lambda|\le 2(\sigma +0.1 K)\sqrt{n}.
\end{equation}
Indeed,
\[\prb[\lambda_1(X)\ge 2(\sigma +0.1 K)\sqrt{n}] \leq \frac1{n^{10}} \leq \frac12\]  
by~(\ref{genFK1m}).  Now consider the random matrix $-X$.  It satisfies the assumptions of Theorem~\ref{genFK}.  Therefore we have 
\[\prb[\lambda_n(-X)\ge 2(\sigma +0.1 K)\sqrt{n}] \leq \frac1{n^{10}} \leq \frac12.\]
As $\lambda_n(-X) = -\lambda_1(X)$, this is the same as $\prb[\lambda_1(X) \le -2(\sigma +0.1 K)\sqrt{n}]$.  Hence~(\ref{median}) follows by definition of median.

We now ready to apply Theorem~\ref{akv}.  Let $Y=\frac{1}{K}X$.  So now each entry of $Y$ is in $[-1,1]$.  Clearly the median of the random variable $\lambda_1(Y)$ is $\lambda / K$.  By~(\ref{median}) and Theorem~\ref{akv}
\[\prb[\lambda_1(X)\ge 2(\sigma +3K)\sqrt n] \leq \prb\left[\left|\lambda_1(Y)-\frac{\lambda}{K}\right|\ge 5.8\sqrt{n}\right] \leq 4e^{\frac{-(5.8)^2n}{32}}\le \frac12e^{-n}.\]
Similarly, we may apply the entire argument above to $-X$ to get
\[\prb[\lambda_1(-X) \ge 2(\sigma + 3K)\sqrt n] \leq \frac12e^{-n}.\]
Noting that $\max_i|\lambda_i(X)|$ is either $\lambda_1(X)$ or $-\lambda_n(X)=\lambda_1(-X)$, we get 
\[\prb\left[\max_{i = 1}^n|\lambda_i(X)|\ge 2(\sigma +3K)\sqrt{n}\right] \leq e^{-n},\]
as claimed.
\end{proof}

Note that there have been some recent results which give tight bounds on the spectra of more general random matrices~\cite{bandeira2016}, but the above are sufficient for our purposes.

Finally, we will need the following fact from linear algebra:

\begin{prop}[Weyl's inequalities]\label{weyl}
Let $X, Y \in \R^{n \times n}$ be symmetric matrices.  Then
\[\lambda_i(X) + \lambda_n(Y - X) \leq \lambda_i(Y) \leq \lambda_i(X) + \lambda_1(Y - X)\]
for $i = 1, \ldots, n$.  
\end{prop}

See, e.g.,~\cite[Theorem~4.4.6]{friedland2015Matrices} for proof.

\section{Eigenvalues of $A$ and $\hat A$}\label{Ahateigs}
%\section{Eigenvalues of $A$}\label{Aeigenvalues}
 
The goal of this section is to prove a separation of the first $k$ eigenvalues of both $A$ and $\hat A$ from the remaining $n-k$.   We begin by examining
the eigenvalues of $A$.
 
Without loss of generality, we may assume $C_1 = \{1, \ldots, s\}, C_2 = \{s + 1, \ldots, 2s\}, \ldots, C_k = \{n - s + 1, \ldots, n\}$.  Then the expectation matrix $A$ looks like:
\[A = 
\left(
\begin{array}{ccc|ccc|c|ccc}
p	& \ldots	& p	& q	& \ldots	& q	& 	& q	& \ldots	& q	\\
\vdots	& \ddots	& \vdots	& \vdots	& \ddots		& \vdots	& \cdots	& \vdots	& \ddots	& \vdots	\\
p	& \ldots	& p	& q	& \ldots	& q	& 	& q	& \ldots	& q	\\
\hline
q	& \ldots	& q	& p	& \ldots	& p	& 	& q & \ldots	& q	\\
\vdots	& \ddots	& \vdots	& \vdots	& \ddots		& \vdots	& \cdots	& \vdots	& \ddots	& \vdots	\\
q	& \ldots	& q	& p	& \ldots	& p	& & q & \ldots	& q	\\
\hline
 & \vdots	& 	& 	& \vdots	&	& \ddots	& 	& \vdots	& 	\\
\hline
q	& \ldots	& q	& q	& \ldots	& q	& 	& p	& \ldots	& p	\\
\vdots	& \ddots	& \vdots	& \vdots	& \ddots	& \vdots	& \cdots	& \vdots	& \ddots	& \vdots	\\
q	& \ldots	& q	& q	& \ldots	& q	& 	& p	& \ldots	& p	
\end{array}
\right) = qJ_n + (p - q)\diag(J_s, \ldots, J_s),\]
where $J_m$ is the $m \times m$ ones matrix.  Thus, $A$ contains all the information about the unknown partition $\mathcal C$.

The following lemma is easily verified: 
 
\begin{lemma}\label{lemma:Aeigs}
The eigenvalues of $A$ are
\begin{eqnarray*}
&&\lambda_1(A) = (p-q)s + qn,\\
&&\lambda_i(A) = (p-q)s\ \mathrm{for}\ i = 2,\ldots,k,	\\
&&\lambda_i(A) = 0\ \mathrm{for}\ i = k + 1, \ldots, n.
\end{eqnarray*}
\end{lemma}

So we see that the smallest positive eigenvalue is proportional to the size of the clusters.

%Now we continue with the goal of proving a separation of the first $k$ eigenvalues of $\hat A$ (i.e., of $\hat G$) from the remaining ones.  
We continue by bounding the spectral norm of $\hat A - A$ (recall that the spectral norm of a symmetric matrix $X \in \R^{n \times n}$ is $||X||_2 = \max_{i = 1}^n|\lambda_i(X)|$; see~\cite[Corollary~4.11.13]{friedland2015Matrices}).
\begin{lemma}\label{normAhat-A}
For sufficiently large $n$, 
\begin{equation}\label{l2normbound}
\|\hat A - A\|_2 \leq 8 \sqrt{n}
\end{equation}
with probability $\geq 1 - e^{-n}$.
\end{lemma}

\begin{proof}
Set $X=(x_{ij})=\hat{A}-\expval[\hat A]$.  Let $\sigma_{ij}$ be the standard deviation of $x_{ij}$ and let $\sigma \ge \sigma_{ij} \mathrm{\ for\ } i,j \in [n]$.  Hence, $X$ satisfies the conditions of Theorem~\ref{strongerFK}, with 
\begin{equation*}\label{vaKsigma}
K=1, \quad \sigma=\max(\sqrt{p(1-p)}, \sqrt{q(1-q)})\le \frac{1}{2}.
\end{equation*}
Thus, 
\begin{equation}\label{applyFK}
||X||_2 = \max_{i = 1}^n|\lambda_i(X)| \leq 2(\sigma + 3K)\sqrt n \leq 7\sqrt n
\end{equation}
with probability $\geq 1 - e^{-n}$ by Theorem~\ref{strongerFK}.

Observe that
\[
\hat{A}-A=\hat A-\expval[\hat A]- p I_{n}=X- pI_{n}
\]
\[
\Rightarrow
\]
\[
\|\hat A-A\|_2=\|X - pI_n\|_2 = \|X\|_2  + \|pI_n \|_2 \le \|X\|_2 +p.
\]
From~(\ref{applyFK}) we deduce that 
\[||\hat A - A||_2 \leq 7\sqrt n + p \leq 8\sqrt n\]
with probability $> 1 - e^{-n}$ for $n>n_0$.  
\end{proof}

We can now use the lemma above to characterize the eigenvalues of $\hat{A}$ (and $A$) as follows:

\begin{lemma}\label{Ahat-eigs}
Assume $\hat A$ satisfies~(\ref{l2normbound}) and $s \geq c\sqrt n$.  Then the largest $k$ eigenvalues of $A$ and $\hat{A}$ are in the interval $[c'\sqrt n,n]$ and all other eigenvalues of $A$ and $\hat{A}$ are in the interval $[-8\sqrt{n},8\sqrt{n}]$, where
\begin{equation}\label{cprimedef}
c' := (p - q)c - 8.
\end{equation}
\end{lemma}

\begin{proof}
Applying Weyl's inequalities (Proposition~\ref{weyl}) to Lemma~\ref{normAhat-A} yields
\[|\lambda_i(\hat{A})-\lambda_i(A)|\le \max(|\lambda_n(A - \hat A)|, |\lambda_1(A - \hat A)|) = ||A - \hat A||_2 \leq 8\sqrt n\]
for $i = 1, \ldots, n$.  Thus, by Lemma~\ref{lemma:Aeigs} we get
\begin{eqnarray*}
\lambda_i(\hat A)	& \geq	& (p - q)s - 8\sqrt n 	\\
					& \geq 	& ((p - q)c - 8)\sqrt n \textrm{ for } i = 1, \ldots, k,\\
|\lambda_i(\hat A)|	& \leq	& 8\sqrt n \textrm{ for } i = k + 1, \ldots, n.
\end{eqnarray*}
The lemma  thus follows by definition of $c'$.

Note that the upper bound of $n$ follows from the fact that for any $X = (x_{ij}) \in \R^{n \times n}$ we have $\lambda_1(X) \leq \max_i\sum_j|x_{ij}|$.
\end{proof}

Lemma~\ref{Ahat-eigs} shows that a.s.\ we have a separation between the largest $k$ eigenvalues and the remaining eigenvalues of both $A$ and $\hat A$, provided that $c' > 8$, or equivalently
\begin{equation}\label{creq1}
c > \frac{16}{p - q}.
\end{equation}
We will assume this is the case from now on.  

Note that the argument above shows that, in fact, $\lambda_1(\hat A) \geq qn + (p - q)s - 8\sqrt n = \Theta(n)$, while $\lambda_2(\hat A), \ldots, \lambda_k(\hat A) \leq (p - q)s + 8\sqrt n = O(s)$, but this information will not be needed hence.  Figure~\ref{eigsfigure} illustrates the distribution of eigenvalues of $A$ and $\hat A$.

\begin{figure}\centering
\begin{tikzpicture}[scale=.08]

% Number line
\draw[<->] (-25, 0) -- (125, 0);

% Ticks
\foreach \x/\lbl in {-20/{$-8\sqrt n$}, 0/{0}, 20/{$8\sqrt n$}, 30/{}, 120/{$n$}}
\draw (\x, 2) -- (\x, -2) node[anchor=north] {\lbl};
\draw (35, -2) node[anchor=north] {$c'\sqrt n$};

% Eigenvalues
\foreach \x in {0, 35, 110}
\draw (\x, 0) circle(1);
\foreach \x in {-4.8, 10, 16.8, -14.2, -7.2, 7.4, 17.8, 31, 45, 47, 49, 53, 107}
\fill (\x, 0) circle(1);

\draw (108.5, 2) node[anchor=south]{$\lambda_1$};
\draw (42, 2) node[anchor=south]{$\lambda_2, \ldots, \lambda_k$};
\draw (0, 2) node[anchor=south]{$\lambda_{k + 1}, \ldots, \lambda_n$};

\end{tikzpicture}
\caption{The distribution of eigenvalues of $A$ ($\circ$) and $\hat A$ ($\bullet$).}\label{eigsfigure}

\end{figure}
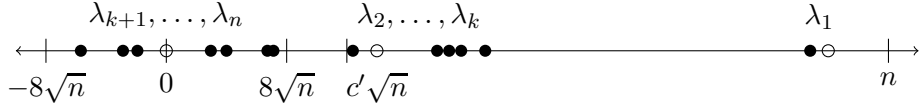

\section{Deviations between the projectors $P_k(\hat{A})$ and $P_k(A)$}\label{sec:deviation}

In this section, we will prove bounds on $\|P_k(\hat{A})-P_k(A)\|_2$ and $\|P_k(\hat{A})-P_k(A)\|_F$, where $\|\cdot\|_2$ and $\|\cdot\|_F$ are
the spectral and the Frobenius matrix norms, respectively.  The following lemma characterizes $P_k(A)$:

\begin{lemma}
\begin{equation}\label{Hdef}
P_k(A) = \frac1s\sum_{i = 1}^k\1_{C_i}\1_{C_i}\trans = \frac1sH,
\end{equation}
where $H \in \{0, 1\}^n$ is the ``true'' cluster matrix whose $(i, j)$th entry is 1 if and only if $i$ and $j$ are in the same cluster.
\end{lemma}

\begin{proof}
Let $\uu_i := \frac1{\sqrt s}\1_{C_i} \in \R^n$ for $i = 1, \ldots, k$, and let $\U$ be the subspace of $\R^n$ spanned by eigenvectors corresponding to $\lambda_1(A), \ldots, \lambda_k(A)$.  It is easily verified that $\uu_1, \ldots, \uu_k$ are an orthonormal basis for $\U$.  Thus, letting $P_\U$ denote the orthogonal projection operator onto $\U$, we get
\[P_k(A) = P_\U = \sum_{i = 1}^k\uu_i\uu_i\trans = \frac1s\sum_{i = 1}^k\1_{C_i}\1_{C_i}\trans.\qedhere\]
\end{proof}

If we assume $C_1 = \{1, \ldots, s\}, C_2 = \{s + 1, \ldots, 2s\}, \ldots, C_k = \{n - s + 1, \ldots, n\}$ as in Section~\ref{Ahateigs}, then $P_k(A)$ looks like:
\[P_k(A) = \frac1s\left(
\begin{array}{ccc|ccc|c|ccc}
1	& \ldots	& 1	& 0	& \ldots	& 0	& 	& 0	& \ldots	& 0	\\
\vdots	& \ddots	& \vdots	& \vdots	& \ddots		& \vdots	& \cdots	& \vdots	& \ddots	& \vdots	\\
1	& \ldots	& 1	& 0	& \ldots	& 0	& 	& 0	& \ldots	& 0	\\
\hline
0	& \ldots	& 0	& 1	& \ldots	& 1	& 	& 0 & \ldots	& 0	\\
\vdots	& \ddots	& \vdots	& \vdots	& \ddots		& \vdots	& \cdots	& \vdots	& \ddots	& \vdots	\\
0	& \ldots	& 0	& 1	& \ldots	& 1	& & 0 & \ldots	& 0	\\
\hline
 & \vdots	& 	& 	& \vdots	&	& \ddots	& 	& \vdots	& 	\\
\hline
0	& \ldots	& 0	& 0	& \ldots	& 0	& 	& 1	& \ldots	& 1	\\
\vdots	& \ddots	& \vdots	& \vdots	& \ddots	& \vdots	& \cdots	& \vdots	& \ddots	& \vdots	\\
0	& \ldots	& 0	& 0	& \ldots	& 0	& 	& 1	& \ldots	& 1	
\end{array}
\right) = \frac1s\diag(J_s, \ldots, J_s)
\]
when represented in the standard basis for $\R^n$.

So we see that the columns of $P_k(A)$ are essentially the indicator vectors of the unknown clusters $C_1, \ldots, C_k$.  The central idea behind Algorithm~\ref{alg} is that if $||P_k(A) - P_k(\hat A)||_F$ is sufficiently small, then some column of $P_k(\hat A)$ is a good approximation to the corresponding column of $P_k(A)$ and can thus be used to recover the corresponding cluster.

\subsection{The Cauchy integral formula for projections}

To prove such bounds on $||P_k(A) - P_k(\hat A)||_2$ and $||P_k(A) - P_k(\hat A)||_F$ we will employ the Cauchy integral formula.  Similar applications
of the Cauchy integral formula are studied in matrix perturbation theory~\cite{GolubL96,Stewart73} and could be adapted to obtain our bound on $||P_k(A) - P_k(\hat A)||_2$, but we include the full proof for the sake of exposition.

Recall that an analytic function $f : \C \to \C$ can be extended to a function of matrices via its Taylor series~\cite{friedland2015Matrices}:
\[f(Z) := f(a)I_n + \frac{f'(a)}{1!}(Z - aI_n) + \frac{f''(a)}{2!}(Z - aI_n)^2 + \ldots.\]
In particular, if $Z$ is diagonalizable as $Z = PDP^{-1}$ (as is any symmetric matrix), then $f(Z) = Pf(D)P^{-1}$, where $f(D)$ is evaluated by simply applying $f$ to each diagonal entry.

Accordingly, we also get an extension of the Cauchy integral formula to matrices~\cite[Theorem~3.4.2]{friedland2015Matrices}:

\begin{thm}
Let $\Omega$ be an open set in $\C$.  Assume that $\Gamma$ is a finite set of disjoint simple, closed curves such that $\Gamma$ is the boundary of an open set $D$, and $\Gamma \cup D \subset \Omega$.  Assume that $Z \in \C^{n \times n}$ and $\lambda_i(Z) \in D$ for $i = 1, \ldots, n$.  Then for any $\phi : \C \to \C$ analytic on $\Omega$
\[\phi(Z) = \frac1{2\pi i}\int_\Gamma(zI_n - Z)^{-1}\phi(z)dz.\]
\end{thm}

We get the following as a corollary~\cite[Problem~3.4.10]{friedland2015Matrices}:

\begin{thm}\label{cauchythm}
Let $B \in \R^{n \times n}$ be a symmetric matrix.  Let $\gamma \subset \C$ be a simple, closed curve which is the boundary of an open set $D$ such that $\lambda_1(B), \ldots, \lambda_k(B) \in D$ and $\lambda_{k + 1}(B), \ldots, \lambda_n(B) \notin D \cup \gamma$.  Then
\[P_k(B) = \frac1{2\pi i}\int_\gamma(zI_n - B)^{-1}dz.\]
\end{thm}

% TODO: add proof sketch

\subsection{A bound on $\|P_k(\hat A)-P_k(A)\|_2$}

As $P_k(\hat{A})$ and $P_k(A)$ are projection operators, we have
\[|P_k(\hat{A})\|_2=\|P_k(A)\|_2=1 \quad \Rightarrow \quad \|P_k(\hat{A})-P_k(A)\|_2\le 2.\]
In fact, we can make this difference arbitrarily small by increasing the cluster size appropriately, as shown in the following lemma.

\begin{lemma}\label{fundestdifprojnorm}  
Assume $\hat A$ satisfies~(\ref{l2normbound}) and $s \geq c\sqrt n$.  Then
\begin{equation}\label{defPkABeps}
\|P_k(\hat{A})-P_k(A)\|_2\le \epsilon
\end{equation}
if $c$ is sufficiently large.
\end{lemma}

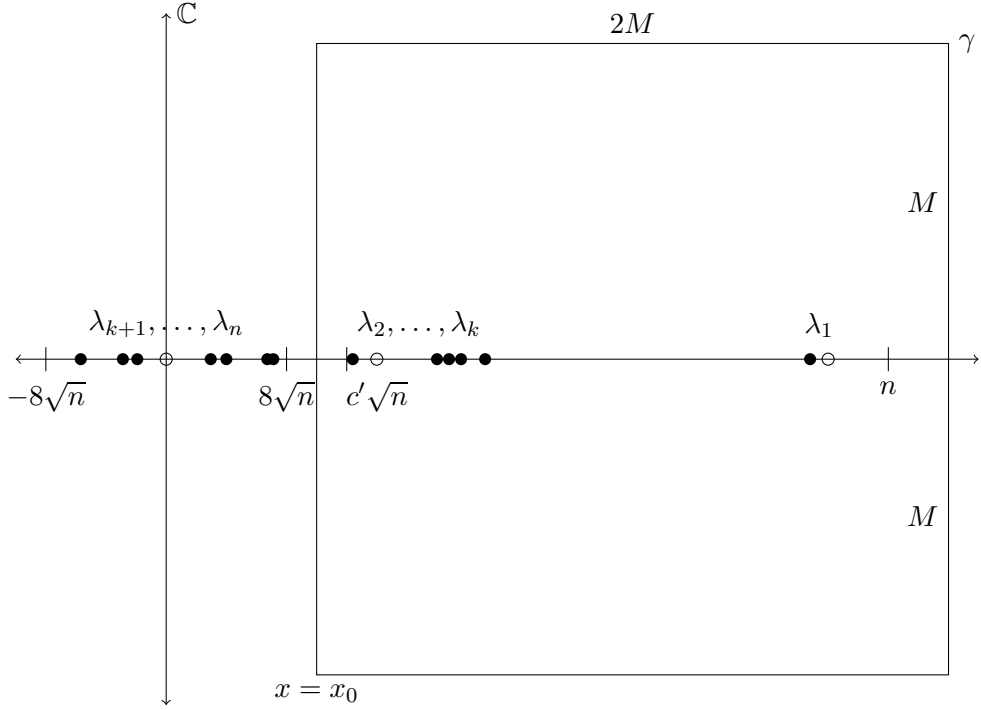
\begin{figure}
\centering
\begin{tikzpicture}[scale=.08]

% Axes
\draw[<->] (-25, 0) -- (135, 0);
\draw[<->] (0, -57.5) -- (0, 57.5) node[anchor=west]{$\C$};

% Ticks
\foreach \x/\lbl in {-20/{$-8\sqrt n$}, 20/{$8\sqrt n$}, 30/{}, 120/{$n$}}
\draw (\x, 2) -- (\x, -2) node[anchor=north] {\lbl};
\draw (35, -2) node[anchor=north] {$c'\sqrt n$};

% gamma
\draw (25, -52.5) rectangle (130, 52.5) node[anchor=west]{$\gamma$};
\draw (77.5, 52.5) node[anchor=south]{$2M$};
\foreach \y in {26.125, -26.125}
\draw (130, \y) node[anchor=east]{$M$};
\draw (25, -52.5) node[anchor=north]{$x = x_0$};

% Eigenvalues
\foreach \x in {0, 35, 110}
\draw (\x, 0) circle(1);
\foreach \x in {-4.8, 10, 16.8, -14.2, -7.2, 7.4, 17.8, 31, 45, 47, 49, 53, 107}
\fill (\x, 0) circle(1);

\draw (108.5, 2) node[anchor=south]{$\lambda_1$};
\draw (42, 2) node[anchor=south]{$\lambda_2, \ldots, \lambda_k$};
\draw (0, 2) node[anchor=south]{$\lambda_{k + 1}, \ldots, \lambda_n$};
\end{tikzpicture}

\caption{The largest $k$ eigenvalues of both $A$ ($\circ$) and $\hat A$ ($\bullet$) are in the interior of $\gamma$, while the remaining $n - k$ eigenvalues are in the exterior.}\label{gamma}
\end{figure}

\begin{proof}
Define $\gamma$ to be a square in the complex plane with the length $2M\gg m$.  Its  sides are parallel to the $x$- and $y$-axes. 
The center of of the square is on the $x$-axis.
The left and right sides of the square are on the lines $x=x_0 := \frac{(c' + 8)\sqrt n}{2}$ and $x=x_0+2M$, respectively, where $c'$ is defined in~(\ref{cprimedef}).  The upper and lower sides of the square are on the lines $y = \pm M$.  Note that by Lemma~\ref{Ahat-eigs} the interior of $\gamma$ contains the $k$ largest eigenvalues of $A$ and $\hat A$ and the exterior of $\gamma$ contains the other $n - k$ eigenvalues of $A$ and $\hat A$ (see Figure~\ref{gamma}).  To get our estimate \eqref{defPkABeps} we will let $M\to \infty$.

Applying Theorem~\ref{cauchythm},
\begin{eqnarray*}\label{Caucintfor}
P_k(\hat A)=\frac{1}{2\pi{i}}\int_{\gamma} (zI_n-\hat A)^{-1}dz,\\ 
P_k(A)=\frac{1}{2\pi{i}}\int_{\gamma} (zI_n-A)^{-1}dz.\notag
\end{eqnarray*}
Hence
\begin{eqnarray*}
P_k(\hat A) - P_k(A)&=&\frac{1}{2\pi{i}}\int_{\gamma} (zI_n-\hat A)^{-1}\big((zI_n-A)-(zI_n-\hat A)\big)(zI_n-A)^{-1} dz\\
&=&\frac{1}{2\pi{i}}\int_{\gamma} (zI_n-\hat A)^{-1}\big(\hat A-A\big)(zI_n-A)^{-1} dz,
\end{eqnarray*}
and so we get
\begin{eqnarray}
\|P_k(\hat A) - P_k(A)\|_2&\le& \frac{1}{2\pi}\int_{\gamma} \|(zI_n-\hat A)^{-1}\big(\hat A-A\big)(zI_n-A)^{-1}\| _2|dz| \label{Cauchyforest} \\
&\le& \frac{1}{2\pi}\int_{\gamma} \|(zI_n-\hat A)^{-1}\|_2\|\hat A-A\|_2 \|(zI_n-A)^{-1}\|_2 |dz|.\notag
\end{eqnarray}
Observe that for each $z\in\C$ the matrices $zI_n-\hat A, zI_n-A $ are normal.
Hence 
\[\|(zI_n-\hat{A})^{-1}\|_2=\frac{1}{\min_{j\in [n]} |z-\lambda_j(\hat{A})|}, \quad  \|(zI_n-A)^{-1}\|_2=\frac{1}{\min_{j\in [n]} |z-\lambda_j(A)|}.\]

  Let us first estimate the contribution to the integral \eqref{Cauchyforest} on the left side of $\gamma$.  Let $z=x_0 +y{i}, y\in\R$.  That is, $z$ lies on the line $x=x_0$.
Therefore, by Lemma~\ref{Ahat-eigs}
\[|z-\lambda_j(A)|, |z - \lambda_j(\hat A)| \ge \sqrt {\left(\frac{(c' - 8)\sqrt n}2\right)^2 +y^2} \textrm{ for } z=x_0 +y{i}.\]
Also recall from~(\ref{l2normbound}) that $\|\hat{A}-A\|_2\le 8\sqrt{n}$. Hence for  $z=x_0 +y{i}$ one has the estimate:
\[
\frac{1}{2\pi}\int_{-M}^{M} \Vert(zI_n-\hat{A})^{-1}\Vert_2  \\
\Vert (\hat{A}-A)\Vert_2 \Vert( zI_n-A)^{-1}\Vert_2 |dz|
\]
\[
<
\]
\begin{eqnarray}\notag
\frac{1}{2\pi}\int_{-\infty}^{\infty}\frac{8\sqrt{n}}{\left(\frac{(c' - 8)\sqrt n}2\right)^2 +y^2} dy =\frac{8}{\pi(c'-8)}\int_{-\infty}^\infty
\frac{du}{1+u^2}
=\frac{8}{(c'-8)}.\label{est1int}
\end{eqnarray}

Next we estimate the contribution of the integral \eqref{Cauchyforest} on the other three sides.   Consider first the side on the line $y=M$.
Since the eigenvalues of $\hat{A}$ and $A$ are real it follows that 
\[
|(x+M{i})I_n-\hat{A}\|_2, \|(x+M{i})I_n-A\|_2\ge M
\]
\[\Rightarrow\]
\[
|((x+M{i})-\hat{A})^{-1}\|_2, \|((x+M{i})-A)^{-1}\|_2\le\frac{1}{M}.
\]
Hence the contribution of \eqref{Cauchyforest} on the upper side of the square is bounded above by $\frac{8M\sqrt{n}}{2\pi M^2}=\frac{4\sqrt{n}}{\pi M}$.
The same upper estimate holds for the lower side of the square on the line $y=-M$.  We now estimate from above the contribution of \eqref{Cauchyforest}
on the right side of the square on $x=x_0+2M$.  Since the eigenvalues of $\hat{A}$ and $A$ are real and at most $n$ it follows that
\[
||(x_0+2M +y{i})I_n-\hat{A}||_2, ||(x_0+2M +y{i})I_n-A||_2 \ \ge \ 2M-n
\]
\[
\Rightarrow
\]
\[
||((x_0+2M +y{i})I_n-\hat{A})^{-1}||_2, ||((x_0+2M +y{i})I_n-A)^{-1}||_2  \ \le  \ \frac{1}{2M-n}.
\]
Hence the contribution of \eqref{Cauchyforest} on the righthand side of the square is bounded above by $\frac{4M\sqrt{n}}{\pi(2M-n)^2}$.
Therefore
\[\|P_k(\hat A)-P_k(A)\|_2\le \frac{8}{(c'-8)}+2 \cdot \frac{4\sqrt{n}}{\pi M}+\frac{4M\sqrt{n}}{\pi(2M-n)^2}.\]
Letting $M\to\infty$ we obtain $||P_k(\hat A) - P_k(A)||_2 \leq \frac8{c' - 8}$.
We therefore get~\eqref{defPkABeps} if 
\begin{equation}\label{sreq2}
\frac8{c' - 8} = \frac8{(p - q)c - 16} \leq \epsilon,
\end{equation}
completing the proof.
\end{proof}

\subsection{A bound on $\|P_k(\hat{A})-P_k(A)\|_F$}
Now we estimate the Frobenius norm of $P_k(\hat{A})-P_k(A)$.  Recall that for any matrix $B=(b_{ij})\in\R^{m\times n}$ 
\[\|B\|_F := \sqrt{\sum_{i=j=1}^{m,n} b_{ij}^2}.\]
Moreover, if $B$ is a symmetric, then
\begin{equation}\label{frobsymmetric}
\|B\|_F^2=\sum_{i=1}^n \lambda_i(B)^2.
\end{equation}
Therefore we obtain the following lemma:
\begin{lemma}\label{estfrobnormGhat}
$||P_k(\hat{A})-P_k(A)||_F^2 \leq 2k||P_k(\hat{A})-P_k(A)||_2^2$.
\end{lemma}

\begin{proof}
Recall that $P_k(\hat{A})$ and $P_k(A)$ have rank $k$.
Hence $P_k(\hat{A})-P_k(A)$ has rank at most $2k$.  So $P_k(\hat{A})-P_k(A)$ has at most $2k$ nonzero eigenvalues.  The lemma thus follows from~(\ref{frobsymmetric}).
\end{proof}

\section{Proof of algorithm's corectness}\label{sec:clusteridproof}

The proof of Algorithm~\ref{alg}'s correctness goes roughly as follows.  We will prove using the spectral analysis in Sections~\ref{Ahateigs}-\ref{sec:deviation} that a.s.\ there is a column $j$ for which $||P_k(\hat A)\1_{W_j}||_2$ is ``large'' (Lemma~\ref{existsgoodcol}).  Next, we will show that any for any such $j$, the set $W_j$ consists mostly of vertices from a single cluster (Lemma~\ref{almostcluster}).  Finally, we show how to recover this cluster exactly by looking at how many neighbors each vertex has in $W_j$ (Lemmas~\ref{recoverclusterswhp}-\ref{recovercluster}).

This argument shows that Algorithm~\ref{alg} succeeds in iteration 1 a.s.  To show that it succeeds in \emph{every} iteration, we will apply the same argument to all ``cluster subgraphs'' of $\hat G$---i.e., those subgraphs induced on a subset of the clusters.  We will prove that all $2^k$ such subgraphs have certain desirable properties a.s., in which case our algorithm \emph{deterministically} succeeds in identifying a cluster.  Therefore, when we remove it we are considering another cluster subgraph, so the algorithm again succeeds, and so on.  Thus, we are able to restrict our analysis to these cluster subgraphs, bounding the number of events that need to occur in order to ensure the algorithm's success.  This is how we avoid the need to randomly split the graph into parts,  as in~\cite{Giesen05,McSherry01,vu2014simple}.  The details of this approach, which we call ``preprocessing the randomness,'' are presented in Section~\ref{mainpf}

\subsection{Notation}\label{mainpfnotation}

We will use the following notation in our proof:
\begin{itemize}
\item	$H = (h_{ij})_{i, j = 1}^n$ -- the ``true cluster matrix'' as defined in~(\ref{Hdef}), i.e., $h_{ij} = 1$ if $i$ and $j$ are in the same cluster, 0 else.

\item	$W_1, \ldots, W_n$ as defined as in step~\ref{step:Wj} of Algorithm~\ref{alg}.

\item	$\hat H = (\hat h_{ij})_{i, j = 1}^n := (\1_{W_1}, \ldots, \1_{W_n})$ -- the ``estimated cluster matrix.''  The idea is that at least one column of $\hat H$ will be a good approximation of the corresponding column of $H$, and we will give a way to find such a column.  Note that each column of $\hat H$ has exactly $s$ 1s and that $\hat H$ need not be symmetric.
\end{itemize}

\subsection{Technical lemmas}\label{techlemmas}

The proof of Theorem~\ref{mainthm} relies on several additional lemmas.  Lemmas~\ref{existsgoodcol}-\ref{recovercluster} fit together roughly as follows:
\begin{itemize}
\item	Lemma~\ref{existsgoodcol} says that a.s.\ there is a column $j$ for which $||P_k(\hat A)\1_{W_j}||_2$ is large.
\item	Lemma~\ref{almostcluster} says that for such a column $j$, $W_j$ consists mostly of vertices from a single cluster $C_i$.
\item	Lemmas~\ref{recoverclusterswhp} and \ref{recovercluster} say that a.s.\ vertices in $C_i$ will have many neighbors in $W_j$, while vertices outside $C_i$ will have relatively few neighbors in $W_j$; hence, we can recover $C_i$ by taking the $s$ vertices with the most neighbors in $W_j$.
\end{itemize}

\begin{lemma}\label{existsgoodcol}
Assume $\hat A$ satisfies~(\ref{l2normbound}).  Then there exists a column $j$ such that 
\begin{equation}\label{goodcoldef}
||P_k(\hat A)\1_{W_j}||_2 \geq (1 - 8\epsilon^2 - \epsilon)\sqrt s.
\end{equation}
\end{lemma}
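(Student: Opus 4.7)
The plan is to locate a single column $j$ whose top $s$ entries in $\hat P := P_l(\hat G_J)$ concentrate on one true cluster $C_i$, so that $W_j(J) \approx C_i$, and then compare $\hat P \1_{W_j(J)}$ to the ``ideal'' vector $P \1_{C_i} = \1_{C_i}$ (of norm $\sqrt{s}$) via the triangle inequality. Here $P := P_l(G_J) = \frac{1}{s} H_J$, so for any $j \in C_i$ the column $P e_j$ equals $\frac{1}{s} \1_{C_i}$: exactly $s$ entries are equal to $1/s$ and the rest are zero.

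First I would locate a ``good'' column by a Frobenius-norm averaging argument. Combining Lemma~\ref{fundestdifprojnorm} and Lemma~\ref{estfrobnormGhat} gives $\|\hat P - P\|_F^2 \leq 2 l \epsilon^2$; distributing this over the $ls$ columns yields some $j$ with $\|(\hat P - P) e_j\|_2^2 \leq 2 \epsilon^2 / s$. Fix this $j$, let $i$ be the index with $j \in C_i$, and set $\alpha := |W_j(J) \setminus C_i|$. Since $|W_j(J)| = s = |C_i|$, we also have $|C_i \setminus W_j(J)| = \alpha$.

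Next I would bound $\alpha$ by a pairing argument. Pair each missed vertex $k \in C_i \setminus W_j(J)$ with a distinct wrongly-included vertex $k' \in W_j(J) \setminus C_i$. By the definition of $W_j(J)$ as the indices of the top $s$ entries of column $j$ of $\hat P$, we have $(\hat P e_j)_{k'} \geq (\hat P e_j)_k$; subtracting the true values $(P e_j)_k = 1/s$ and $(P e_j)_{k'} = 0$ gives $((\hat P - P) e_j)_{k'} - ((\hat P - P) e_j)_k \geq 1/s$. Using $(a - b)^2 \leq 2(a^2 + b^2)$, each such pair contributes at least $1/(2 s^2)$ to $\|(\hat P - P) e_j\|_2^2$. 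Summing over all $\alpha$ (disjoint) pairs yields $\alpha/(2 s^2) \leq 2 \epsilon^2 / s$, whence $\alpha \leq 4 \epsilon^2 s$.

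Finally I would convert this into the claimed norm bound. Writing $\beta_t := |W_j(J) \cap C_{i_t}|$, a direct computation from $P = \frac{1}{s} H_J$ gives $\|P \1_{W_j(J)}\|_2^2 = \frac{1}{s} \sum_t \beta_t^2 \geq (s - \alpha)^2 / s$, and then $(1 - \alpha/s)^2 \geq 1 - 2\alpha/s$ together with $\sqrt{1 - y} \geq 1 - y$ yield $\|P \1_{W_j(J)}\|_2 \geq (1 - 8 \epsilon^2) \sqrt{s}$. Applying the triangle inequality with Lemma~\ref{fundestdifprojnorm} and $\|\1_{W_j(J)}\|_2 = \sqrt{s}$ then produces (\ref{goodcoldef}). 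The main obstacle in the plan is the pairing step; the key trick is to exploit $(a - b)^2 \leq 2(a^2 + b^2)$, since individually neither $((\hat P - P) e_j)_{k'}$ nor $((\hat P - P) e_j)_k$ need be large in magnitude --- only their signed difference is controlled by the top-$s$ ordering.
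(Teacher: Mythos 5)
Your proof is correct, and while it follows the same outer skeleton as the paper's (Frobenius bound from Lemmas~\ref{fundestdifprojnorm} and~\ref{estfrobnormGhat}, averaging to select a column, showing $W_j(J)$ nearly coincides with one cluster, then the triangle inequality with $\|\hat P - P\|_2 \le \epsilon$), the key middle step is genuinely different. The paper works with the estimated cluster matrix $\hat B = \hat H(J)$ globally: it uses the optimality of the top-$s$ selection in the form of the trace inequality $\tr(\hat B\trans\hat P) \ge \tr(B\hat P)$ to conclude $\|\frac1s\hat B - \hat P\|_F \le \|\frac1s B - \hat P\|_F$, then bounds $\|B - \hat B\|_F$ by the triangle inequality and averages over columns of $B - \hat B$ to find a column where the 0--1 vectors $B e_{j^*}$ and $\hat B e_{j^*}$ (i.e., $C_{i^*}$ and $W_{j^*}(J)$) nearly agree. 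You instead average over columns of $\hat P - P$ directly and then run a local pairing argument: each misclassified pair $(k, k')$ forces a signed gap of $1/s$ in $(\hat P - P)e_j$, contributing at least $1/(2s^2)$ to $\|(\hat P - P)e_j\|_2^2$ via $(a-b)^2 \le 2(a^2+b^2)$, which bounds $|W_j(J) \triangle C_i| $. Your pairing is essentially the per-coordinate incarnation of the paper's trace inequality (both encode that $W_j(J)$ is the greedy maximizer of $\sum_{i \in W}\hat P_{ij}$ subject to $|W| = s$, $j \in W$), but it is more elementary, avoids introducing $\hat B$ and the Frobenius triangle inequality, and in fact gives the slightly sharper bound $|W_j(J)\setminus C_i| \le 4\epsilon^2 s$, so the final estimate $(1 - 8\epsilon^2 - \epsilon)\sqrt{s}$ follows with room to spare. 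One small remark: your last chain of inequalities is more roundabout than needed, since $\|P\1_{W_j(J)}\|_2 \ge (s-\alpha)/\sqrt{s} = (1 - \alpha/s)\sqrt{s}$ directly; but as written it is still valid.
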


\begin{proof}
Lemmas~\ref{fundestdifprojnorm} and \ref{estfrobnormGhat} together give
\begin{equation*}
||P_k(A) - P_k(\hat A)||_F^2 \leq 2k\epsilon^2.
\end{equation*}
By definition of $\hat H$,
\[\tr(H^2) = \tr(\hat H\trans\hat H) = ns\]
and, letting $P_k(\hat A) = (\hat p_{ij})_{i, j = 1}^n$, for each column $j \in [n]$ we have
\[(\hat H\trans P_k(\hat A))_{jj} = \sum_{i = 1}^n\hat h_{ij}\hat p_{ij} \geq \sum_{i = 1}^nh_{ij}\hat p_{ij} = (HP_k(\hat A))_{jj} \quad \Rightarrow \quad \tr(\hat H\trans P_k(\hat A)) \geq \tr(HP_k(\hat A)).\]
Recall also~(\ref{Hdef}) that $P_k(A) = \frac1sH$.  Therefore
\begin{eqnarray*}
2k\epsilon^2	& \geq	& ||P_k(\hat A) - P_k(A)||_F^2	\\
& = 		& \|\frac1sH - P_k(\hat A)\|_F^2		\\
& =		& \frac{1}{s^2} \tr(H^2)  +\tr(P_k(\hat A)^2) - 2\frac{1}{s}\tr (HP_k(\hat A))	\\
& \geq	& \frac{1}{s^2}  \tr(\hat H\trans \hat H)  +\tr(P_k(\hat A)^2) -2\frac{1}{s}\tr (\hat H\trans P_k(\hat A))	\\
& =		& \|\frac1s\hat H - P_k(\hat A)\|_F^2.
\end{eqnarray*}
The triangle inequality then yields:
\[\|\frac{1}{s}H-\frac{1}{s}\hat H\|_F\le \|\frac{1}{s}H-P_k(\hat A)\|_F +\|\frac{1}{s}\hat H-P_k(\hat A)\|_F\le 2\epsilon\sqrt{2k}.
\]
Thus 
\begin{equation*}
||H - \hat H||_F^2 = \sum_{j = 1}^n\left(\sum_{i = 1}^n(h_{ij} - \hat h_{ij})^2\right) \leq 8k\epsilon^2s^2,
\end{equation*}
so by averaging there exists a column $j^*$ such that
\begin{equation}\label{jstar}
\sum_{i = 1}^n(h_{ij^*} - \hat h_{ij^*})^2 \leq \frac1n \cdot 8k\epsilon^2s^2 = 8\epsilon^2s.
\end{equation}

Now let $C_{i^*}$ be the cluster containing $j^*$.  Define $W = W_{j^*}$, $U = W \cap C_{i^*}$, $V = W \setminus U$.  Thus we have
\[P_k(A)\1_U=\frac{|U|}{s}\1_{C_{i^*}},\; P_k(A)\1_V=\sum_{i \neq i^*}a_i\1_{C_i},\;  0\le a_i, \sum_{i \neq i^*} a_i=\frac{s-|U|}{s}.\]
By~(\ref{jstar}) we have $|U| \geq (1 - 8\epsilon^2)s$, so
\[||P_k(A)\1_{W}||_2^2 = ||P_k(A)\1_U||_2^2 + ||P_k(A)\1_V||_2^2 \geq ||P_k(A)\1_U||_2^2 = \frac{|U|^2}{s} \geq (1 - 8\epsilon^2)^2s.\]
Finally, note that by Lemma~\ref{fundestdifprojnorm} we have $||P_k(\hat A) - P_k(A)||_2 \leq \epsilon$, so the triangle inequality yields the desired result:
\begin{eqnarray*}
||P_k(\hat A)\1_{W}||_2	& \geq 	& ||P_k(A)\1_{W}||_2 - ||(P_k(\hat A) - P_k(A))\1_{W}||_2	\\
& \geq 	& ||P_k(A)\1_{W}||_2 - ||P_k(\hat A) - P_k(A)||_2||\1_W||_2	\\
& \geq 	& (1 - 8\epsilon^2)\sqrt s - \epsilon\sqrt s.
\end{eqnarray*}
\end{proof}

\begin{lemma}\label{almostcluster}
Assume $\hat A$ satisfies~(\ref{l2normbound}) and $j$ satisfies~(\ref{goodcoldef}).  Then $|W_j \cap C_i| \geq (1 - 3\epsilon)s$ for some $i \in [k]$.  
\end{lemma}

\begin{proof}
For $W \subseteq [n]$ define
\[t(W) := \max_{i = 1}^k|C_i \cap W|, \quad \tau := \min_Wt(W),\]
where the minimum is taken over all $W \subseteq [n]$ such that $|W| = s$ and 
\begin{equation}\label{Wcondition}
||P_k(A)\1_W||_2 \geq (1 - 8\epsilon^2 - 2\epsilon)\sqrt s.
\end{equation}
We will argue that
\[t(W_j) \geq \tau \geq (1 - 3\epsilon)s,\]
which proves the lemma.  

The first inequality is easy: for $W = W_j$, the triangle inequality yields
\begin{eqnarray*}
||P_k(A)\1_W||_2	& \geq	& ||P_k(\hat A)\1_W||_2 - ||P_k(A) - P_k(\hat A)||_2||\1_W||_2	\notag \\
& \geq	& (1 - 8\epsilon^2 - \epsilon)\sqrt s - \epsilon\sqrt s		\notag	\\
& =		& (1 - 8\epsilon^2 - 2\epsilon)\sqrt s.\label{P1Wtriangle}
\end{eqnarray*}
So $t(W_j) \geq \tau$.  We just need to show that $\tau \geq (1 - 3\epsilon)s$.

For $W \subseteq [n], i \in [k]$, let $t_i(W) := |C_i \cap W|$.  Construct $W \subseteq [n]$ such that
\begin{itemize}
\item	$|W| = s$.
\item	$W$ satisfies~(\ref{Wcondition}).
\item	$t(W) = \tau$.
\item	$||P_k(A)\1_W||_2$ is as large as possible.
\end{itemize}
We will argue that $W$ must have a special structure: namely, it is split between two clusters.

By relabeling the clusters, we may assume without loss of generality that 
\[\tau = t(W) = t_1(W) \geq \ldots \geq t_k(W).\]
We claim that 
\begin{enumerate}

\item	$t_2(W) < t_1(W)$.  Suppose to the contrary. 
Maximize $\sum_{i=1}^k x_i^2$, such that $x_1=x_2\ge x_3\ge \ldots \ge x_k\ge 0$ and $\sum_{i=1}^kx_i=s$.  It is easy show that  the maximum occurs when $x_1=x_2=\frac{s}{2}$ and $x_3=\ldots=x_k=0$.  Hence $\sum_{i=1}^k t_i(W)^2\le \frac{s^2}{2}$.  By~(\ref{Wcondition}) we have
\[(1 - 8\epsilon^2 - 2\epsilon)^2s \leq ||P_k(A)\1_W||_2^2 = \frac1s\sum_{i = 1}^kt_i(W)^2 \leq \frac s2,\]
which is equivalent to $(1 - 8\epsilon^2 - 2\epsilon) \leq \frac1{\sqrt 2}$.  Choosing $\epsilon$ sufficiently small ($\epsilon \leq .1$ works) we get a contradiction. 

\item	$t_3(W) = \ldots = t_k(W) = 0$.  Assume this is not the case.  Then
\[\tau = t(W) = t_1(W) > t_2(W) \geq t_3(W) \geq 1, \quad \sum_{i =1}^kt_i(W) = s.\]
In particular, $C_3 \cap W$ and $C_2 \setminus W$ are both nonempty.  Now construct $\tilde W$ from $W$ by replacing a vertex from $C_3 \cap W$ with one from $C_2 \setminus W$.  Clearly $|\tilde W| = s$, and $t(\tilde W) = \tau$ since only $t_2$ increases and $t_2(W) < t_1(W) = \tau$.  But
\begin{eqnarray*}
||P_k(A)\1_{\tilde W}||_2^2 - ||P_k(A)\1_W||_2^2	& =	& \frac1s\sum_{i = 1}^kt_i(\tilde W)^2 - \frac1s\sum_{i = 1}^kt_i(W)^2	\\
& =		& \frac1s\left[(t_2(W) + 1)^2 + (t_3(W) - 1)^2 - t_2(W)^2 - t_3(W)^2\right]		\\
& =		& \frac2s(t_2(W) - t_3(W) + 1)	\\
& >		& 0,
\end{eqnarray*}
contradicting the maximality of $||P_k(A)\1_W||_2$.
\end{enumerate}

Thus, $W$ is split between two clusters $C_1$ and $C_2$; i.e., $W = U \cup V$, where $U := W \cap C_1$ and $V := W \cap C_2$.  So by~(\ref{Wcondition}) we have
\begin{equation*}\label{Uquadratic}
(1 - 8\epsilon^2 - 2\epsilon)^2s \leq ||P_k(A)\1_W||_2^2 = ||P_k(A)(\1_U + \1_V)||_2^2 = \frac{|U|^2}{s} + \frac{(s - |U|)^2}{s}.
\end{equation*}
Solving the inequality for $|U|$ yields
\[\tau = \max\{|U|, |V|\} \geq (1 - 3\epsilon)s,\]
provided $\epsilon$ is small enough (again $\epsilon \leq .1$ is sufficient).  This completes the proof.
\end{proof}

%Note that in order for the proof of Lemma~\ref{almostcluster} to go through we require that $\epsilon \leq .1$, which can be accomplished by making
%\begin{equation}\label{p-qlineardependency}
%c \geq \frac{88}{p - q}. 
%\end{equation}

\begin{lemma}\label{recoverclusterswhp}
Consider cluster $C_i$ and vertex $j \in [n]$.  If $j \in C_i$, then
\begin{equation}\label{jinCi}
|N_{\hat G}(j) \cap C_i| \geq (p - \epsilon)s
\end{equation}
with probability $\geq 1 - e^{-\epsilon^2s}$, and if if $j \notin C_i$, then
\begin{equation}\label{jnotinCi}
|N_{\hat G}(j) \cap C_i| \leq (q + \epsilon)s
\end{equation}
with probability $\geq 1 - e^{-\epsilon^2s}$.
\end{lemma}

\begin{proof}
Let $j \in C_i$.  Then $\expval[|N(j) \cap C_i|] = p(s - 1)$, so Hoeffding's inequality yields
\[\prb[|N(j) \cap C_i| \leq (p - \epsilon)s] \leq e^{-2(\epsilon s - p)^2/ (s - 1)} \leq e^{-\epsilon^2s}\]
for $n$ (hence $s$) sufficiently large.  On the other hand, if $j \notin C_i$.  Then $\expval[|N(j) \cap C_i|] = qs$, so
\[\prb[|N(j) \cap C_i| \geq (q + \epsilon)s] \leq e^{-2\epsilon^2s} \leq e^{-\epsilon^2s}.\qedhere\]
\end{proof}

\begin{lemma}\label{recovercluster}
Let $W \subseteq [n]$ such that $|W| = s$ and $|W \cap C_i| \geq (1 - 3\epsilon)s$ for some $i \in [k]$.  Then
\begin{enumerate}[a)]
\item	If $j \in C_i$ and $j$ satisfies~(\ref{jinCi}), then $|N_{\hat G}(j) \cap W| \geq (p - 4\epsilon)s$.
\item	If $j \in [n] \setminus C_i$ and $j$ satisfies~(\ref{jnotinCi}), then $|N_{\hat G}(j) \cap W| \leq (q + 4\epsilon)s$.\label{recoverclusterb}
\end{enumerate}
\end{lemma}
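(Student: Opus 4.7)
The plan is to derive each bound by a short set-theoretic manipulation, using only the hypothesis $|W \cap C_i| \geq (1 - 3\epsilon)s$ (which implies $|C_i \setminus W| \leq 3\epsilon s$ and $|W \setminus C_i| \leq 3\epsilon s$, since $|W|=|C_i|=s$) together with the given bound on $|N_{\hat G}(j) \cap C_i|$. No probabilistic argument is needed here: the randomness has already been absorbed into the hypotheses coming from Lemma~\ref{recoverclusterswhp}, so Lemma~\ref{recovercluster} is a purely deterministic statement about how close $N_{\hat G}(j) \cap W$ is to $N_{\hat G}(j) \cap C_i$ when $W$ is close to $C_i$.

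For part (a), I would write $N_{\hat G}(j) \cap W \supseteq N_{\hat G}(j) \cap (W \cap C_i)$, and then use
\begin{equation*}
|N_{\hat G}(j) \cap (W \cap C_i)| \;\geq\; |N_{\hat G}(j) \cap C_i| \;-\; |C_i \setminus W| \;\geq\; (p - \epsilon)s - 3\epsilon s \;=\; (p - 4\epsilon)s,
\end{equation*}
where the first inequality comes from removing at most $|C_i \setminus W|$ elements from $N_{\hat G}(j) \cap C_i$ to pass to $N_{\hat G}(j) \cap (W \cap C_i)$, and the second uses~(\ref{jinCi}) together with $|C_i \setminus W| = s - |W \cap C_i| \leq 3\epsilon s$.

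For part (b), I would split $N_{\hat G}(j) \cap W$ into its pieces inside and outside $C_i$ and bound each trivially:
\begin{equation*}
|N_{\hat G}(j) \cap W| \;\leq\; |N_{\hat G}(j) \cap C_i| \;+\; |W \setminus C_i| \;\leq\; (q + \epsilon)s + 3\epsilon s \;=\; (q + 4\epsilon)s,
\end{equation*}
using~(\ref{jnotinCi}) and the same bound $|W \setminus C_i| \leq 3\epsilon s$.

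There is no real obstacle in this lemma — it is essentially a symmetric-difference estimate. The only thing to be a bit careful about is keeping track of which part of $W$ (inside $C_i$ or outside) contributes to $N_{\hat G}(j) \cap W$ in each direction: in (a) we throw away the piece of $W$ outside $C_i$ and lose a $3\epsilon s$ term because $W$ misses part of $C_i$; in (b) we keep both pieces and pay $3\epsilon s$ because $W$ might contain neighbors of $j$ that are outside $C_i$ but that we have no direct control over (we simply bound their number by $|W \setminus C_i|$). This is where the factor $3$ in $3\epsilon$ turns into the factor $4$ in $4\epsilon$, and no further refinement is needed.
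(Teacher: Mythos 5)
Your proof is correct and matches the paper's argument essentially verbatim: part (a) is the same chain of inequalities, and part (b) is exactly the "similar argument" the paper leaves implicit. Nothing to add.
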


\begin{proof}
Assume $j \in C_i$ and $j$ satisfies~(\ref{jinCi}).  As $|C_i| = s$, we have $|C_i \setminus W| \leq 3\epsilon$.  Therefore,
\begin{eqnarray*}
|N(j) \cap W|	& \geq	& |N(j) \cap W \cap C_i|	\\
& =		& |N(j) \cap C_i| - |(N(j) \cap C_i) \setminus W|	\\
& \geq	& |N(j) \cap C_i| - |C_i \setminus W|	\\
& \geq	& (p - \epsilon)s - 3\epsilon s	\\
& =		& (p - 4\epsilon)s.
\end{eqnarray*}
Part~\ref{recoverclusterb}) follows by a similar argument.
\end{proof}

This lemma gives us a way to differentiate between vertices $j \in C_i$ and vertices $j \notin C_i$ as shown in Figure~\ref{fig:recovercluster}, provided 
\begin{equation}\label{epsiloncondition}
p - 4\epsilon \geq q + 4\epsilon.
\end{equation}

\begin{figure}
\centering
\begin{tikzpicture}[scale=.08]
\draw (0, 0) ellipse(10 and 20);
\draw (-8.66, -10) -- (8.66, -10);
\draw (0, 0) node[anchor=south]{$W \cap C_i$};
\draw (0, 20) node[anchor=south]{$W$};
\foreach \x in {-30, 30}
\fill (\x, 0) circle(1);
\draw (-30, 0) node[anchor=east]{\begin{tabular}{r}$j \in C_i$ \\ $\geq (p - 4\epsilon)s$\end{tabular}};
\draw (30, 0) node[anchor=west]{\begin{tabular}{l}$j \notin C_i$ \\ $\leq (q + 4\epsilon)s$\end{tabular}};
\foreach \y in {-18, -14, ..., 18}
\draw (-30, 0) -- (-1, \y);
\foreach \y in {-18, -9, 0, 9, 18}
\draw[densely dotted] (30, 0) -- (1, \y);
\end{tikzpicture}
\caption{If $W$ has large overlap with $C_i$, then a.s.\ vertices in $C_i$ will have many neighbors in $W$, while vertices not in $C_i$ will have relatively few neighbors in $W$.}\label{fig:recovercluster}
\end{figure}
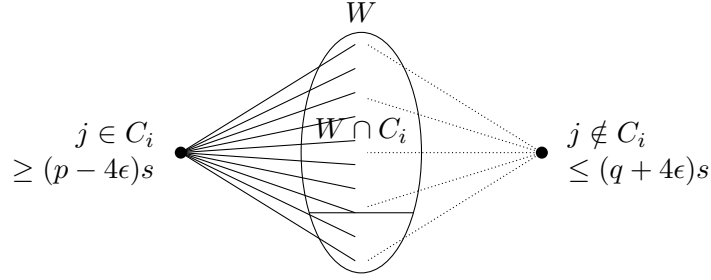

\subsection{Main proof}\label{mainpf}

To prove Theorem~\ref{mainthm}, we will define certain (exponentially many) events on the probability space $\mathcal G(n, \mathcal C, p, q)$ and show that
\begin{enumerate}
\item	As long as they all occur, Algorithm~\ref{alg} \emph{definitely} succeeds.
\item	They all occur simultaneously a.s.
\end{enumerate}
Therefore, Algorithm~\ref{alg} succeeds a.s.

Before we define the events let us introduce some notation:
\begin{itemize}
\item	For $J \subseteq [k]$,  define $\hat G_J$ to be the subgraph of $\hat G$ induced by clusters $C_i, i \in J$, i.e.\ $\hat G_J := \hat G\left[\bigcup_{i \in J}C_i\right]$.  Then for any fixed $J$ we have 
\begin{equation}\label{induceddistr}
\hat G_J \sim \mathcal G(|J|s, \{C_i : i \in J\}, p, q).
\end{equation}
\item	For an $n \times n$ matrix $B$ define $B_J$ to be principal submatrix of $B$ with row and column indices in the clusters $C_i, i \in J$, i.e. $B_J :=  B\left[\bigcup_{i \in J}C_i\right]$.  
\end{itemize}
We will refer to these subgraphs and submatrices as \emph{cluster subgraphs} and \emph{cluster submatrices}.

Now we define two types of events in $\mathcal G(n, \mathcal C, p, q)$:
\begin{itemize}
\item	\emph{Spectral events}: for $J \subseteq [k]$, let $E_J$ be the event that $||\hat A_J - A_J||_2 \leq 8\sqrt{|J|s}$.
\item	\emph{Degree events}: for $1 \leq i \leq k, 1 \leq j \leq n$, let $D_{i, j}$ be the event that $|N_{\hat G}(j) \cap C_i| \geq (p - \epsilon)s$ if $j \in C_i$, or the event that $|N_{\hat G}(j) \cap C_i| \leq (q + \epsilon)s$ if $j \notin C_i$.
\end{itemize}
Thus, we have defined a total of $2^k + nk$ events.  Essentially, these are the events that every $\hat G_J$ satisfies~(\ref{l2normbound}) and that~(\ref{jinCi}) and~(\ref{jnotinCi}) are satisfied for all $i \in [k], j \in [n]$.  Note that the events are well-defined, as their definitions depend only on the underlying probability space $\mathcal G(n, \mathcal C, p, q)$ and not on the random graph $\hat G$ sampled from the space.

Now we are finally ready to prove the theorem:

\begin{proof}[Proof of Theorem~\ref{mainthm}]

Assume $E_J$ and $D_{i, j}$ hold for all $J \subseteq [k], i \in [k], j \in [n]$.  We will prove by induction that Algorithm~\ref{alg} succeeds in every iteration.

For the base case, take the original graph $\hat G = \hat G_{[k]}$ considered in the fist iteration.  Since $E_{[k]}$ is assumed to hold, (\ref{l2normbound}) is satisfied.  Thus, by Lemma~\ref{existsgoodcol}, the column $j = j^*$ identified in step \ref{step:jstar} satisfies~(\ref{goodcoldef}).  Then by Lemma~\ref{almostcluster} we have $|W_{j^*} \cap C_i| \geq (1 - 3\epsilon)s$ for some $i \in [k]$.  Finally, since $D_{i, j}$ is assumed to hold for all $j \in [n]$, step~\ref{step:id} correctly identifies $C = C_i$ by Lemma~\ref{recovercluster}.

Now assume Algorithm~\ref{alg} succeeds in the first $t$ iterations, i.e., it correctly identifies a cluster and removes it in each of these iterations.  Then the graph considered in the $(t + 1)$st iteration is a cluster subgraph $\hat G_J$ for some $J \subseteq [k]$, $|J| = k - t$.  Note that $\hat G_J$ has $|J|s = (k - t)s$ vertices.  Now we apply Lemmas~\ref{Ahat-eigs}-\ref{almostcluster} with $\hat A_J$ instead of $\hat A$, $A_J$ instead of $A$, $k - t$ instead of $k$, and $(k - t)s$ instead of $n$.

Since $E_J$ is assumed to hold, by Lemma~\ref{existsgoodcol} the column $j = j^*$ identified in step \ref{step:jstar} of Algorithm~\ref{alg} satisfies $||P_{k - t}(\hat A_J)\1_{W_j}||_2 \geq (1 - 8\epsilon^2 - \epsilon)\sqrt s$.  Note that $\hat H$ and $W_j$ (Sections~\ref{mainpfnotation}-\ref{techlemmas}) are constructed from $\hat G_J$, not the original graph $\hat G$.  Now by Lemma~\ref{almostcluster} we have $|W_{j^*} \cap C_i| \geq (1 - 3\epsilon)s$ for some $i \in J$.  Finally, since $D_{i, j}$ is assumed to hold for all $j \in [n]$, step~\ref{step:id} once again correctly identifies $C = C_i$ by Lemma~\ref{recovercluster}.

We have thus proved that Algorithm~\ref{alg} succeeds as long as $E_J$ and $D_{i, j}$ hold for all $J \subseteq [k], i \in [k], j \in [n]$.  Now, for any fixed nonempty $J \subseteq [k]$ we have $\hat G_J \sim \mathcal G(|J|s, \{C_i : i \in J\}, p, q)$, so by Lemma~\ref{normAhat-A}
\[\prb[{E}_J] \geq 1-e^{-|J|s} \geq 1-e^{-s}.\]
By Lemma~\ref{recoverclusterswhp}, for any $i, j$
\[\prb[D_{i, j}] \geq 1 - e^{-\epsilon^2s.}\]
Taking a union bound over all $J, i, j$, the probability that all $E_J$ and $D_{i, j}$ hold is $\geq 1 - 2^ke^{-s} - nke^{-\epsilon^2s}$.  Therefore, as $\epsilon$ is constant and $k \leq \sqrt n \leq s$, Algorithm~\ref{alg} succeeds with probability $\geq 1 - \left(\frac2e\right)^{-\sqrt n} - n^{3 / 2}e^{-\sqrt n}$.

Note that we require~(\ref{epsiloncondition}) in order for step~\ref{step:id} of Algorithm~\ref{alg} to correctly recover a cluster according to Lemma~\ref{recovercluster}.  In addition, the proof of Lemma~\ref{almostcluster} requires $\epsilon \leq .1$.  By~(\ref{sreq2}), we can satisfy both of these conditions by setting $c := \max\left\{\frac{88}{p - q}, \frac{72}{(p-q)^2}\right\}$.
\end{proof}

\section*{Acknowledgements}

We would like to thank the anonymous reviewers of this and previous versions of our paper for pointing us to relevant past work and 
for calling our attention to the fact that the iterations of Algorithm~\ref{alg} cannot be handled independently.
%, as we attempted to do in v1 of the preprint (\url{http://arxiv.org/abs/1503.00423v1}). 
The latter issue fixed herein %mistake in this updated 
by ``preprocessing the randomness'' as discussed in Sections~\ref{sec:algsection} and~\ref{sec:clusteridproof}.  

This research was partly supported by ARO grant 66497-NS and NSF grant IIS-1526379.
\bibliography{paper}
\bibliographystyle{plain}

  \appendix
  \counterwithin{thm}{section}

\section{Comparison with previous results}\label{prevwork}

The following table compares our work with previous algorithms for recovering planted partitions.  Note that some of the algorithms apply to more general planted clustering settings, but here we list their performance \emph{only in the setting considered in this paper}, i.e.~for planted partition with constant edge probabilities and uniform cluster sizes.  In particular, note that the the well-known results~\cite{McSherry01, vu2014simple} achieve the $\sqrt n$ bound in certain settings, but only when some of the clusters are size $\omega(\sqrt n)$.

\begin{center}
\begin{tabular}{|lrl|l|l|}
\hline
Paper	& 	&	& Minimum cluster size 	& Algorithm type			\\
& & &  for planted partition & \\
\hline
McSherry 	& 2001	&  \cite[Theorem~4]{McSherry01}		& $\Omega(n^{2 /3})$	& Spectral			\\
Giesen \& Mitsche	& 2005	& \cite{Giesen05}	& $\Omega(\sqrt n)$	& Spectral	\\
Oymak \& Hassibi	& 2011	& \cite{oymak2011}		& $\Omega(\sqrt n)$	& Convex programming	\\
Ames			& 2014	& \cite{ames2014}	& $\Omega(\sqrt n)$		& Semidefinite programming	\\
Chen et al.	& 2014	& \cite{chen2014}	& $\Omega(\sqrt n)$	& Convex programming	\\
Vu			& 2014	& \cite[Theorem~2]{vu2014simple}	& $\omega(\sqrt{n\log n})$	& Spectral	\\
			&		&					&									&			\\
Our result	&		&					& $\Omega(\sqrt n)$	& Spectral	\\
\hline
\end{tabular}
\end{center}

Thus, we see that while many have succeeded in recovering clusters of size $\Omega(\sqrt n)$, prior to this paper, only Giesen and Mitsche \cite{Giesen05} had done so using a purely spectral approach (to the best of our knowledge).  While their proof techniques have much in common with our own, our algorithm is arguably much simpler.  Furthermore, their algorithm takes $2^{\Theta(k)} \cdot \poly(n)$ time and is thus inefficient when $k =\omega(\log n)$.

%\subsection{Comparison of running times}\label{rtcomp}
%
%
%Here we compare the running time of Algorithm~\ref{alg} to those of other well-known spectral algorithms.  As noted above, the running time of the algorithm in~\cite{Giesen05} is exponential in $k$.  The algorithms in~\cite{McSherry01, vu2014simple} both use subprocess called ``combinatorial projection''
%
%\begin{center}
%\begin{tabular}{|lrl|l|l|}
%\hline
%Paper	& 	&	& Running time using 	& Running time using			\\
%& & &  deterministic SVD & randomized SVD\\
%\hline
%McSherry 	& 2001	&  \cite[Theorem~4]{McSherry01}		& $\Omega(n^{2 /3})$	& Spectral			\\
%Giesen \& Mitsche	& 2005	& \cite{Giesen05}	& $\Omega(\sqrt n)$	& Spectral	\\
%Vu			& 2014	& \cite[Theorem~2]{vu2014simple}	& $\omega(\sqrt{n\log n})$	& Spectral	\\
%			&		&					&									&			\\
%Our result	&		&					& $\Omega(\sqrt n)$	& Spectral	\\
%\hline
%\end{tabular}
%\end{center}
%
%Other spectral algorithms~\cite{McSherry01, vu2014simple} get by with just one or two SVD computations and are thus more efficient; however, to the best of our knowledge ours is the first efficient spectral algorithm to do so when $s = \Theta(\sqrt n)$.  See Appendix~\ref{prevwork}.
%
%

  \end{document}